\newcommand{\abbba}[0]{\ensuremath{\bar{A}\bar{B}\bar{B}\bar{B}\bar{A}}}
\newcommand{\nhphantom}[1]{\sbox0{#1}\hspace{-\the\wd0}}
\newtheorem{theorem}{Theorem}[section]
\newtheorem{lemma}[theorem]{Lemma}
\newtheorem{corollary}[theorem]{Corollary}
\newtheorem{example}[theorem]{Example}
\newtheorem{definition}[theorem]{Definition}
\theoremstyle{remark}
\newtheorem*{basicidea}{Basic Idea}
\newtheorem*{construction}{Construction}
\title{Conclusive Tree-Controlled Grammars}
\author{Dominika Klobu\v{c}n\'{i}kov\'{a} \qquad\qquad Zbyn\v{e}k K\v{r}ivka \qquad\qquad Alexander Meduna
\institute{Centre of Excellence IT4Innovations, Faculty of Information Technology,
Brno University of Technology\\ 
Bo\v{z}et\v{e}chova 2, 612 66  Brno, Czech Republic}
\email{iklobucnikova@fit.vut.cz\qquad\qquad krivka@fit.vut.cz \qquad\qquad meduna@fit.vut.cz}
}
\begin{document}
\maketitle

\begin{abstract}
This paper presents a new approach to regulation of grammars. It divides the~derivation trees generated by grammars into two sections---generative and~conclusive (the~conclusion). The former encompasses generation of symbols up till the moment when the lowest rightmost terminal of the derivation tree is generated, whereas the latter represents the final steps needed to successfully generate a sentence. A~control mechanism based on regulating only the conclusion is presented and subsequently applied to tree-controlled grammars, creating conclusive tree-controlled grammars. As the main result, it is shown that the ratio between depths of generative and conclusive sections does not influence the generative power. In addition, it is demonstrated that any recursively enumerable language is generated by these grammars possessing no more than seven nonterminals while the regulating language is union-free. 
\end{abstract}

\section{Introduction}
\label{intro}
Derivation trees serve as a graph representation of derivations leading to specific sentential forms. Naturally, since this notion corresponds to the~rewriting process of a~grammar in a~deterministic manner, it can be utilized to create a~mechanism used to regulate formal grammars. 

Such grammar types can be represented by tree-controlled grammars -- a~tree-con\-trol\-led grammar is defined as a~pair $(G, R)$, where $G$ is an~ordinary context-free grammar and~$R$ is a~regular control language (see~\cite{CulikII1977129}). The~language generated by~$(G, R)$ is defined by this equivalence: this language contains a~word $x$ if and only if $x \in L(G)$ and~there is a~derivation tree for $x$ in~$G$ such that $R$ contains every word obtained by concatenating the~symbols labeling the nodes in the same level for all levels of the~tree.
Although based upon context-free grammars, tree\--con\-trol\-led grammars are computationally complete---that is, they characterize the~family of the recursively enumerable languages. Considering this advantage, it comes as no surprise that formal language theory intensively investigates these grammars (see~\cite{Dasssow2008, Tua11}). 

However, a question presents itself: is it truly necessary to regulate all levels of a derivation tree by the~control language or would it be sufficient to verify only a few key moments of the derivation? A similar notion has been utilized by scattered context grammars with a single context-sensitive rule (see~\cite{MK20}) and by the Geffert normal form~(see \cite{geffert}), both of which are known to be equal to type-0 grammars, which are computationally complete.

Consequently, this paper presents the separation of derivation trees into two distinct parts---the \emph{generative part} and the~\emph{conclusive part} (\emph{conclusion} for short, see Figure~\ref{fig:derivationTree}). Intuitively, the former represents the~derivation tree starting from the start nonterminal and extending to the~deepest level in the derivation tree which still contains a terminal symbol, while the~latter is in charge of verifying---or concluding---the~derivation and~erasing the~remaining auxiliary nonterminals.

The~paper proposes a~different approach of regulating the~derivation tree compared to the~classic tree-controlled grammars: instead of regulating the~entire tree, it focuses specifically on regulation of the~subtree forest found in the~conclusion of the~derivation tree.  

This paper proves that  these grammars are computationally complete even if their control languages belong to the family of extended union-free regular languages (see~\cite{Dasssow2008}). We establish that any recursively enumerable language is generated by a seven\--non\-ter\-mi\-nal conclusive tree-controlled grammar $(G, R)$, where $R$ is an extended union-free regular language, and~in addition, $R$ uses no more than seven iterations ($\ast$) and~no more than ten concatenations.  

The~paper is organized as follows: first, an~introduction to graph theory, derivation trees and~their anatomy is given. Then, the conclusive tree-controlled grammars and~languages generated by them are introduced, followed by a~discussion on their generative power and~a comparison with type-0 grammars. At the~end of the~paper, an~overview of the~results is given and several open problems are listed.

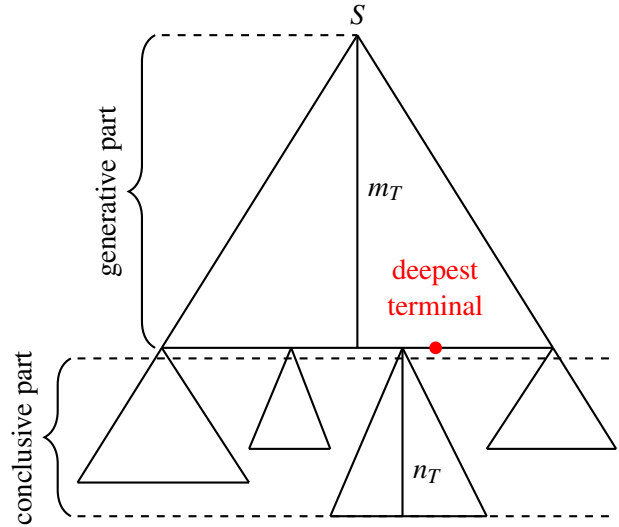
\begin{figure}[t]
\centering
\begin{tikzpicture}[thick,scale=0.8]
\coordinate (outBL) at (0, 0);
\coordinate (outBR) at (10, 0);
\coordinate (outT) at (5, 8);
\coordinate (firstW) at (3.2, 0);
\coordinate (secondW) at (4.55, 0);
\coordinate (secondmid) at (3.9, 0);
\coordinate (thirdmid) at (5.75, 0);
\coordinate (thirdW) at (7.15, 0);
    \path (outBL) -- coordinate[pos=0.35] (mainBL) (outT);
    \path (outT) -- coordinate[pos=0.65] (mainBR) (outBR);
    \draw (mainBL) -- (outT);
    \draw (mainBR) -- (outT);
    \draw (mainBR) -- (mainBL);
    \path (outBL) -- coordinate[pos=0.2] (firstBL) (mainBL);
    \draw (firstBL) -- coordinate[pos=.25] (secondBL) (mainBL);
    \draw (mainBL) -- (firstW |- firstBL);
    \draw (firstBL) -- (firstW |- firstBL);
    \draw (mainBL -| secondmid) -- (secondBL -| secondW); 
    \draw (secondBL -| firstW) -- (secondBL -| secondW);
    \draw (mainBL -| secondmid) -- (secondBL -| firstW);
    \draw (secondW) -- (thirdW);
    \draw (secondW) -- (mainBL -| thirdmid);
    \draw (mainBL -| thirdmid) -- (thirdW);
    \path (mainBR) -- coordinate[pos=.6] (fourthBR) (outBR);
    \draw (fourthBR) -- (mainBR);
    \draw (fourthBR) -- (thirdW |- fourthBR);
    \draw (thirdW |- fourthBR) -- (mainBR);
    \draw [decorate,decoration={brace,amplitude=10pt,raise=2.5pt}] (mainBL) -- (mainBL |- outT) node[xshift=-20pt,midway,rotate=90]{\normalsize generative part};
    \draw [decorate,decoration={brace,amplitude=10pt,raise=2.5pt}] (outBL -| firstBL) -- ([yshift=-5pt]firstBL |- mainBL) node[xshift=-20pt,midway,rotate=90]{\normalsize conclusive part};
    \path (mainBL) -- coordinate[pos=.7] (deepest) (mainBR);
    \filldraw[red] (deepest) circle (2.5pt) node[anchor=south, label={[align=center]deepest\\terminal}]{};
    \node[anchor=south] at (outT) {$S$};
    \draw (outT) -- (mainBL -| outT) node[anchor=west,midway]{$m_T$};
    \draw (mainBL -| thirdmid) -- (thirdmid) node[anchor=west,near end]{$n_T$};
    \draw[dashed] ([yshift=-5pt]firstBL |- mainBL) -- ([yshift=-5pt]mainBR -| fourthBR);
    \draw[dashed] (outBL -| firstBL) -- (fourthBR |- outBR);
    \draw[dashed] (mainBL |- outT) -- (outT);
    
    \end{tikzpicture}
    \caption{Separation of derivation tree, $T$, into the generative and conclusive part based on the deepest (lowest rightmost) terminal; all symbols located to its right are nonterminal. The conclusive part starts with the following level, and contains no terminals. No symbol belongs to both generative and conclusive parts; $m_T$ and $n_T$ denote the depth of the generative and conclusive part, respectively.}
    \label{fig:derivationTree}
\end{figure}

\section{Definitions}
\label{sec:def}
This paper assumes that the reader is familiar with language theory (see~\cite{introduction}).

For an~alphabet, $V$, $V^*$ represents the~free monoid generated by~$V$ under the operation of concatenation.  The unit of~$V^*$ is denoted by $\varepsilon$.  Set $V^+ =  V^* - \{\varepsilon\}$; algebraically, $V^+$ is thus the~free semigroup generated by~$V$ under the~operation of~concatenation.  For $w\in V^*$, $|w|$ denotes the~length of~$w$.  Furthermore, $\operatorname{suffix}(w)$ denotes the~set of~all suffixes of~$w$, and~$\operatorname{prefix}(w)$ denotes the set of all prefixes of $w$.  
For $w\in V^*$ and $T\subseteq V$, $\operatorname{occur}(w, T)$ denotes the number of occurrences of symbols from $T$ in $w$. For instance, $\operatorname{occur}(abdabc, \{a, d\}) = 3$. If $T  = \{a\}$, where $a\in V$, we simplify $\operatorname{occur}(w, \{a\})$ to $\operatorname{occur}(w, a)$.
For a sequence, $x = (a_1, a_2, \dots, a_n)$, where $a_i \in V$ for $1 \leq i \leq n$, $|x| = n$ denotes the length of $x$. By $\mathbb{N}$, we denote the set of all positive integers. Let $I \subset \mathbb{N}$ be a finite nonempty set. Then, $\max(I)$ denotes the~maximum of $I$.

\begin{definition}
	\label{def:ufrl}
	\emph{Union-free regular languages} (\emph{UFRL} for short) over an~alphabet $\Sigma$ are defined recursively as follows: 
	\begin{enumerate}[label=(\roman*)]
		\item{$\{\varepsilon\}$, $\emptyset$ are \emph{UFRL} over $\Sigma$;}
		\item{for every $a \in \Sigma$, $\{a\}$ is an~\emph{UFRL} over $\Sigma$;}
		\item{let $X, Y$ be \emph{UFRL}, then,}
		\begin{enumerate}[label=(\alph*)]
			\item{$XY$ is an~\emph{UFRL} (concatenation),}
			\item{$X^*$ is an~\emph{UFRL} (iteration).}
		\end{enumerate}
	\end{enumerate}
\end{definition}
\noindent The~family of~\emph{UFRL} is denoted by $\mathbf{UFREG}$.

\begin{definition}
    \label{def:eufrl}
    \emph{Extended union-free regular languages} (\emph{EUFRL} for short) over an~alphabet $\Sigma$ are defined recursively as follows: 
	\begin{enumerate}[label=(\roman*)]
		\item{\label{enum2:i} $\{\varepsilon\}$, $\emptyset$ are \emph{EUFRL} over $\Sigma$;}
		\item{\label{enum2:ii}
			for every $X \subseteq \Sigma$, $X$ is an~\emph{EUFRL} over $\Sigma$;}
		\item{\label{enum2:iii} let $X, Y$ be \emph{EUFRL}, then,}
		\begin{enumerate}[label=(\alph*)]
			\item{\label{enum2:a} $XY$ is an~\emph{EUFRL} (concatenation),}
			\item{\label{enum2:b} $X^*$ is an~\emph{EUFRL} (iteration).}
		\end{enumerate}
	\end{enumerate}
\end{definition}
\noindent The~family of~\emph{EUFRL} is denoted by $\mathbf{EUFREG}$.

Observe that $\{a, b\} \in \mathbf{EUFREG} - \mathbf{UFREG}$ but $\{a, b\}^\ast = \{a^\ast b^\ast\}^\ast \in \mathbf{UFREG}$.

A \emph{type-0 grammar} is a quadruple $G = (N, \Sigma, P, S)$, where $N$ and $\Sigma$ are the finite alphabets of nonterminals and terminals, respectively, such that $N \cap \Sigma = \emptyset$, $S \in N$ is the start nonterminal, and $P$ is the set of productions in the form of $x \to y$, where $x, y \in (N \cup \Sigma)^\ast$, $x \not\in \Sigma^\ast$. Let $V = N \cup \Sigma$. For some $p = x \to y \in P$ (a~production labeled by $p$), $\operatorname{lhs}(p)$ denotes $x$ as \emph{the left-hand side} of $p$ and~$\operatorname{rhs}(p)$ denotes $y$ as \emph{the right-hand side} of $p$. The~direct derivation relation over $V^\ast$, denoted by $\Rightarrow$, is defined as follows: $uxv \Rightarrow uyv\ [p]$ in~$G$, or simply $uxv \Rightarrow uyv$, if and only if $u, v \in V^\ast$ and~$p: x \to y \in P$. Let $\Rightarrow^n$ and $\Rightarrow^\ast$ denote the $n$th power of~$\Rightarrow$, for some $n \geq 0$, and the~reflexive-transitive closure of $\Rightarrow$, respectively.
The language generated by $G$ is denoted by $L(G)$ and defined as $L(G) = \{x : S \Rightarrow^\ast x, x \in \Sigma^\ast\}$. Two grammars are \emph{equivalent} if both generate the same language. The~family of languages generated by type-0 grammars (also known as the family of recursively enumerable languages) is denoted by $\mathbf{RE}$.

Both, \emph{right-linear grammars} and \emph{context-free grammars} are type-0 grammars, $G = (N, \Sigma,$ $P, S)$, where $N$, $\Sigma$, and $S$ have the same meaning as in the previous definition, and $P$ is the set of productions in the form of $A \to w$, where $A \in N$, $w \in \Sigma^\ast(N \cup \{\varepsilon\})$ and $A \in N$, $w \in (N \cup \Sigma)^\ast$, respectively.

\begin{definition}
	\label{def:geffert}
	A~\emph{type-0 grammar}, $G = (\{S, S', A, B\}, \Sigma, P$ $\cup$ $\{ABBBA$ $\to \varepsilon\}, S)$ is said to be in~the~third Geffert normal form if every production, $p \in P$, has one of~the~following forms:
	\begin{enumerate}[label=(\roman*)]
		\item {\label{def:geffert1}$S \to uSa$},
		\item {\label{def:geffert2}$S \to S'$},
		\item {\label{def:geffert3}$S' \to uS'v$},
		\item {\label{def:geffert4}$S' \to uv$},
	\end{enumerate}
	where $u \in \{AB, ABB\}^\ast$, $v \in \{BA, BBA\}^\ast$, and~$a \in \Sigma$. 
\end{definition}
Recall that type-0 grammars are computationally complete~\cite{Med00b}.
\begin{lemma}[Geffert normal form \cite{geffert}]
	\label{lemma:geffert}
	For every type-0 grammar, $G$, there exists an~equivalent grammar in~the~third Geffert normal form.
\end{lemma}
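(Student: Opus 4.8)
The plan is to follow Geffert's classical construction, reducing an arbitrary type-0 grammar to the prescribed four-nonterminal form whose only non-context-free production is the single deletion rule $ABBBA \to \varepsilon$. I would proceed in two stages: first normalize the given grammar into a convenient intermediate form, and then encode that form into the $\{S, S', A, B\}$ alphabet so that the cancellation of adjacent symbols is driven by one rule.

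\textbf{Stage 1 (intermediate normalization).} Starting from $G = (N, \Sigma, P, S_0)$, I would bring it to a form in which every production is either context-free or a bounded cancellation of adjacent symbols---productions of the shapes $X \to YZ$, $X \to a$, $X \to \varepsilon$, together with erasing rules of the form $XY \to \varepsilon$. This is the standard reduction of type-0 grammars to a ``context-free core plus erasing'' presentation and does not alter the generated language; it isolates exactly one kind of non-context-free behaviour that will later be implemented by $ABBBA \to \varepsilon$.

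\textbf{Stage 2 (encoding).} I would then split generation into two phases governed by $S$ and $S'$. The first phase, using $S \to uSa$ and $S \to S'$, reads off the terminal word to the right of $S$ while prepending to its left a code $u \in \{AB, ABB\}^*$ recording the ``program'' to be verified; the second phase, using $S' \to uS'v$ and $S' \to uv$, emits balanced codes on both sides of $S'$, with left codes from $\{AB, ABB\}$ and right codes from $\{BA, BBA\}$. These are chosen so that a left code immediately followed by a right code concatenates to exactly $ABBBA$ precisely when the two are dual: $AB$ with $BBA$ and $ABB$ with $BA$ (while mixed pairs yield $ABBA$ or $ABBBBA$ and cannot cancel). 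Once the context-free productions are exhausted, the sentential form lies in $\{A, B\}^* \Sigma^*$, and $ABBBA \to \varepsilon$ repeatedly erases dual adjacent codes.

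\textbf{Correctness.} The forward inclusion $L(G) \subseteq L(\hat{G})$ is a direct simulation: each derivation of $G$ is mirrored by the two context-free phases followed by the intended nested sequence of $ABBBA$-cancellations, leaving precisely the terminal word. The reverse inclusion is the crux, and the step I expect to be the main obstacle: I must show that no misapplication of the unconstrained rule $ABBBA \to \varepsilon$ produces a word outside $L(G)$. For this I would argue by a combinatorial invariant on $\{A, B\}^*$---tracking $\operatorname{occur}(\cdot, A)$ against $\operatorname{occur}(\cdot, B)$ together with the block structure of the coded prefix---to establish that such a prefix reduces to $\varepsilon$ only through the unique, properly nested matching of dual codes, so that every terminal word surviving the deletions corresponds to a genuine derivation in $G$.
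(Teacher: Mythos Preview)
The paper does not prove this lemma at all; it merely cites Geffert's original paper and uses the result as a black box. So there is no proof in the paper to compare your proposal against.

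That said, your sketch is essentially Geffert's own argument, and the key ideas are correct. In particular, the duality you identify---$AB$ pairs with $BBA$ and $ABB$ pairs with $BA$, each concatenating to $ABBBA$, while the cross pairs give $ABBA$ or $ABBBBA$---is exactly the mechanism that forces the single erasing rule to act only at the junction. The point you flag as the main obstacle (that $ABBBA\to\varepsilon$ cannot be misapplied) is handled by a simple structural observation rather than a counting invariant: in any string from $\{AB,ABB\}^*$, consecutive $A$'s are separated by one or two $B$'s, and likewise in $\{BA,BBA\}^*$; hence three consecutive $B$'s between two $A$'s can occur only across the boundary between the left and right encodings, and after each cancellation the new boundary inherits the same property. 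This localises every application of the erasing rule and makes the reverse inclusion straightforward. Your Stage~1 is somewhat heavier than necessary---Geffert works directly from an arbitrary type-0 grammar without first reducing to a ``context-free plus $XY\to\varepsilon$'' form---but it does no harm.
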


Let $H = (A, \rho)$ be a~directed graph, where $A$ is the~set of~\emph{nodes} and~$\rho$ is a relation on~$A$ consisting of \emph{edges}. A~sequence of nodes, $(a_0, a_1, \dots, a_n)$ for some $n \geq 1$, is a~\emph{path of length $n$} from $a_0$ to $a_n$ if $(a_{i-1}, a_i) \in \rho$ for all $1 \leq i \leq n$. If $a_0 = a_n$, the~path is a~\emph{cycle}; $H$ is \emph{acyclic} if it contains no cycles. For a~path $(a_0, a_1, \dots, a_n)$, $a_0$ is the~ancestor of $a_n$ and $a_n$ is the~descendant of $a_0$.
A~\emph{tree} is an~acyclic graph $T = (A, \rho)$ such that $A$ contains a~specific node, called the~\emph{root of $T$} and~denoted by $\operatorname{root}(T)$ and~every $a \in A - \{\operatorname{root}(T)\}$ is a descendant of $\operatorname{root}(T)$. If a node $a$ has no descendants, it is a~\emph{leaf}. Otherwise, it is an~\emph{interior node}. We shall consider $T$ to be an~ordered tree, causing every interior node $a \in A$ to have all its direct descendants $b_1\cdots b_n$ ordered from left to right. The~\emph{frontier} of $T$, denoted as $\operatorname{frontier}(T)$, is the~sequence of all leaves of $T$ ordered from left to right. The~length of the~longest path from $\operatorname{root}(T)$ in~$T$ is referred to as the~\emph{depth} of $T$, denoted by $\operatorname{depth}(T)$. For $0 \leq i \leq \operatorname{depth}(T)$, the \emph{$i$-th level} of $T$, denoted as $\operatorname{level}(T, i)$, is the~sequence of nodes of $T$ in the order defined by the ordered tree with a~path of length $i$ from $\operatorname{root}(T)$. A tree $S = (B, \nu)$ is a \emph{subtree} of $T$ if $\emptyset \subset B \subseteq A$, $\nu \subseteq \rho \cap (B \times B)$, and no node in $A - B$ is a descendant of a~node in~$B$. The tree $S$ is considered to be an~\emph{elementary subtree of $T$} if $\operatorname{depth}(S) = 1$. All trees in this paper are considered to be directed in top-down manner (so called \emph{out-degree tree}) and~ordered from left to right.

Let $G = (N, \Sigma, P, S)$ be a~context-free grammar and $A \to x \in P$ be a~production. Let $T$ be the elementary tree satisfying $\operatorname{root}(T) = A$ and~$\operatorname{frontier}(T) = x$  such that either $\operatorname{root}(T)$ has $\varepsilon$ as its only descendant, or $x = a_1\dots a_n$, $n \geq 1$ with  $\operatorname{root}(T)$ having $a_1\cdots a_n$ as its descendants; then, $T$ is considered the~\emph{production tree} representing $A \to x$. A~\emph{derivation tree} is any tree such that its root belongs to $N$ and~each of its elementary subtrees is a~production tree representing a~rule $p \in P$.
The~set of all derivation trees, $T'$, such that $\operatorname{root}(T) = S$ and~$\operatorname{frontier}(T') = x$ is denoted by $\Delta_G(x)$; by extension, the~set of all derivation trees generating a~language, $L$, is denoted by $\Delta_G(L)$. 

The~\emph{generative part} of $T$ is the~sub\-tree whose root is $S$ and the lowest level is the last level containing a terminal in $T$, whereas the~\emph{conclusive part} (or \emph{conclusion}) contains the~remaining portion of the~tree; observe that all leaves in the conclusive part are $\varepsilon$. Let $m_T$ be the~depth of the~generative part of~$T$ and~$n_T$ be the~maximum depth of subtrees from the~conclusion of~$T$ (see Figure~\ref{fig:derivationTree}). If $m_T \geq n_T$, $T$ is said to have a~\emph{short conclusion}; otherwise, it is said to have a~\emph{long conclusion}. The~\emph{depth of conclusion of $T$} is the~maximum depth of a~derivation tree located in the~conclusive part of $T$.
\begin{definition}
    Let $G = (N, \Sigma, P, S)$ be a~context-free grammar and $x \in \Sigma^\ast$ be a~string. The~set of all derivation trees, derivation trees with short conclusion and~long conclusion is, respectively, defined as
\begin{align*}
    \prescript{}{c}{\Delta_G(x)} &= \Delta_G(x),\\
    \prescript{}{sc}{\Delta_G(x)} &= \{t \in \Delta_G(x) : m_t \geq n_t\}, \text{ and}\\
    \prescript{}{lc}{\Delta_G(x)} &= \{t \in \Delta_G(x) : m_t < n_t\}, 
\end{align*}
such that $m_t, n_t$ are the~depths of the~generative and~conclusive part of the~derivation tree~$t$ of $x$.
\end{definition}

A \emph{tree-controlled grammar} (\emph{TCG} for short) is a pair $H = (G, R)$, where $G = (N, \Sigma, P, S)$ is a~context-free grammar and $R \subseteq (N \cup \Sigma)^*$ is a regular control language.
The language generated by $H$ is denoted by $L(H)$ and defined by 
\begin{align*}
L(H) = \{&x : x \in L(G), \exists t \in \Delta_G(x)\textnormal{ such that }\\
&\textrm{for all }0 \le i < \operatorname{depth}(t), \operatorname{level}(t, i) \in R\}.
\end{align*}

\begin{example}\label{ex:tcg}
Let $H =(G, R)$ with $G = (\{S, A, B, C\}, \{a, b\}, P, S)$ be a~TCG, where $P$ contains
\[\begin{array}{llll}
    1\colon\ S \rightarrow AB, & 2\colon\ A \to AA, & 3\colon\ A \to a, & 4\colon\ B \to bB,\\
    5\colon\ B \to BC, & 6\colon\ B \to \varepsilon, & 7\colon\ C \to \varepsilon,
\end{array}\]
with the regular control language $R = \{S\}$ $\cup$ $\{A\}^\ast\{b\}^\ast\{B\}^\ast\{C\}^\ast$ $\cup$ $\{a\}^\ast \{b\}^\ast \{B\}^\ast \{C\}^\ast$. In this way, we can generate $aabb$ in a successful derivation depicted by the derivation tree, $T$, in Figure~\ref{fig:derivation-tree-aabb}.
\begin{center}
\begin{figure}[t]
\centering
\begin{forest}
[$S$,name=root
    [$A$
        [$A$
            [$a$,name=borderleft]
        ]
        [$A$
            [$a$]
        ]
    ]
    [$B$
        [$b$]
        [$B$
            [$b$]
            [$B$,name=borderright
                [$B$
                    [$\varepsilon$,name=borderbottom]
                ]
                [$C$,name=bordertop
                    [$\varepsilon$]
                ]
            ]
        ]
    ]
]
\node[draw=black,dashed,thick,fit=(borderleft)(root)(borderright),inner sep=5pt] {};
\node[draw=black,dashed,thick,fit=(bordertop)(borderbottom)(bordertop -| borderbottom)] {};
\draw [decorate,decoration={amplitude=10pt}] (borderleft |- root) 
node[above,rotate=90,xshift=-50pt,yshift=15pt]{\normalsize generative part};
\draw [decorate,decoration={amplitude=10pt}] (borderbottom |- bordertop) 
node[above,rotate=90,xshift=-20pt,yshift=15pt]{\normalsize conclusive part};
\end{forest}
\caption{Derivation tree $T$ of $H$ for $aabb$  where the root is $S$}
\label{fig:derivation-tree-aabb}
\end{figure}
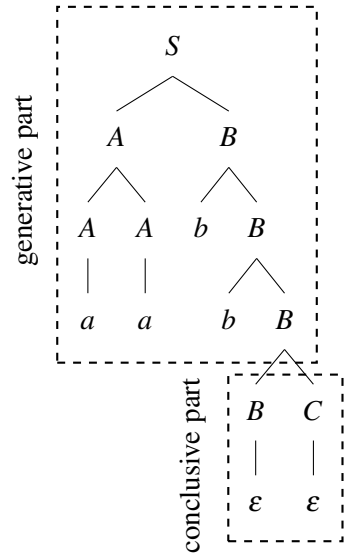
\end{center}
The rules of $G$ are applied in an arbitrary order until the rule $A \to a$ is applied on a nonterminal~$A$ on level~2 of the derivation tree, effectively forcing the other instances of $A$ to be rewritten to $a$ as well; note that the derivation tree does not necessarily reflect the order in which the individual nonterminals were rewritten. 

Observe that $L(H) = \{a^{2^i}b^j: i, j \geq 0\}$, which is a non-context-free language. The~deepest (lowest and rightmost) terminal, $b$, is located on level~3, which marks the~end of the generative part of the~derivation tree; during the conclusive part, the remaining nonterminals $B$ and $C$ are erased. Consequently, $m_T = \operatorname{max}(\{|w|-1\colon w \in \{(S, A, A, a), (S, B, b), (S, B, B, B)\}\}) = 3$, and~$n_T = \operatorname{max}(\{|w|-1\colon w \in \{(B, \varepsilon), (C, \varepsilon)\}\}) = 1$, making $T$ a derivation tree with a~short conclusion as $m_T \geq n_T$. 
\end{example} 

Now, we introduce a modification of tree-controlled grammars that utilizes a level-controlling condition only for the~conclusive part of the~derivation tree.

\begin{definition}
    \label{def:condition}
    Let $H = (G, R)$ be a TCG and $T \in \Delta_G(L(G))$. Then, the~\emph{conclusive condition} ($CC(T)$) holds if and only if for all $m_T < i \leq \operatorname{depth}(T)$, $\operatorname{level}(T, i) \in R$. 
\end{definition}

\begin{definition}
\label{def:conclang}
Let $H = (G, R)$ be a TCG. The~\emph{conclusive}, \emph{short-conclusive}, and \emph{long-conclusive language} generated by $H$ is defined by
\begin{align*}
    \prescript{}{c}{L(H)} &= \{x \in L(G) : t \in \prescript{}{c}{\Delta_G(x)} \textrm{ and } CC(t) \textrm{ holds}\}, \\
\prescript{}{sc}{L(H)} &= \{x \in L(G) : t \in \prescript{}{sc}{\Delta_G(x)} \textrm{ and } CC(t) \textrm{ holds}\}, \textit{ and}\\
\prescript{}{lc}{L(H)} &= \{x \in L(G) : t \in \prescript{}{lc}{\Delta_G(x)} \textrm{ and } CC(t) \textrm{ holds}\},
\end{align*}
respectively.
    
The~family of~conclusive, short-conclusive, and long-conclusive languages generated by tree-con\-trolled grammars are denoted by $\mathbf{CTC}$, $\mathbf{sCTC}$, and $\mathbf{lCTC}$, respectively.
    
For brevity, a TCG generating conclusive, short-conclusive, or long-conclusive language is referred to as a \emph{conclusive tree-controlled grammar} (\emph{CTCG} for short).
\end{definition}

\begin{example}
Consider the~TCG, $H = (G, R)$, from Example~\ref{ex:tcg}. Let $H' = (G, R')$ be a CTCG such that $R' = \{B\}^\ast\{C\}^\ast$. Observe that since the~regulation has been reduced only to the~conclusion of the~derivation tree, $L(H') = \{a\}^+\{b\}^\ast$.
\end{example}

\section{Results}
It has been established that tree-controlled grammars are computationally complete even if their control set is restricted to a subregular language (see~\cite{Dasssow2008, Tur12}). This section extends the~principles to conclusive tree-controlled grammars and establishes their computational completeness using an~extended union-free regular control language.

First, the~equality of languages generated by subtypes of conclusive tree-controlled grammars is established.

\begin{theorem}\label{thm:lcsc}
Let $H = (G, R)$, where $G = (N, \Sigma, P, S)$, be a~conclusive tree-controlled grammar. Then, $\prescript{}{lc}{L(H)} \subseteq \prescript{}{sc}{L(H)}$.
\end{theorem}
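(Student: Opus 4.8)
The plan is to take an arbitrary $x \in \prescript{}{lc}{L(H)}$, fix a long-conclusion derivation tree $t \in \prescript{}{lc}{\Delta_G(x)}$ for which $CC(t)$ holds, and reshape $t$ into a short-conclusion derivation tree $t'$ of the same $x$ that still satisfies the conclusive condition; since $\prescript{}{sc}{L(H)}$ asks only for the existence of one such tree, this suffices. Before touching $t$ I would record the two structural facts the whole argument rests on: every leaf of the conclusion is $\varepsilon$, so any surgery confined to the conclusive part leaves $\operatorname{frontier}(t) = x$ intact; and, since $CC(t)$ holds, each of the level-strings $\operatorname{level}(t, m_t + 1), \dots, \operatorname{level}(t, \operatorname{depth}(t))$ already belongs to $R$. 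The target is thus to reach $m_{t'} \geq n_{t'}$ using only moves that keep the surviving conclusive levels inside $R$.

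The first lever shortens the conclusion. The conclusion is precisely a parallel derivation of $\varepsilon$ in which, from one level to the next, every nonterminal is rewritten simultaneously. Consequently, if two conclusive levels $m_t < i < j \leq \operatorname{depth}(t)$ carry the same string, the block of parallel steps strictly between them can be excised by replacing the subtree rooted at each node of level $i$ by the subtree rooted at the matching node of level $j$. The new sequence of conclusive levels is a subsequence of the old one, so it stays inside $R$ and $CC$ is preserved, while the frontier is unchanged and $n_t$ strictly drops; iterating removes every repeated conclusive level.

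The second lever deepens the generative part without disturbing the conclusion. If the path from $\operatorname{root}(t)$ to the lowest rightmost terminal repeats a nonterminal $A$ inside an erasable context, i.e.\ realizes $A \Rightarrow^{+} u A v$ with $u, v \Rightarrow^{*} \varepsilon$, then duplicating the block between the two occurrences lengthens the path to that terminal and raises $m_t$ while leaving the generated word intact. If such padding can be installed uniformly down to level $m_t$, the entire conclusive subforest is merely relocated to deeper absolute levels; its level-strings are unchanged, so their membership in $R$, and hence $CC$, is unaffected.

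The main obstacle is to combine these moves into a guarantee that $m_{t'} \geq n_{t'}$ is always reachable. Excision by itself only lowers $n_{t'}$ to the number of distinct conclusive level-strings, which need not descend to $m_t$; and the pumping that deepens the generative part is available only when the terminal path already repeats a nonterminal inside an $\varepsilon$-erasable context, which is exactly what fails when the generative part is shallow and the conclusion is a short, repetition-free erasing chain. I would therefore expect the heart of the proof to be a careful induction on $n_t - m_t$ that, at each step, either excises a repeated conclusive level or installs enough uniform generative padding to absorb one conclusive level into the generative part, while certifying throughout that every relocated or surviving conclusive level remains in $R$. Making this simultaneous control of the two depths and of membership in $R$ watertight is where essentially all the difficulty resides.
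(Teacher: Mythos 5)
You correctly isolate the two natural tree-surgery moves, and you are also right that they do not combine into a proof---but the gap is not merely technical, it is unclosable: for a fixed grammar $G$ the literal inclusion $\prescript{}{lc}{L(H)} \subseteq \prescript{}{sc}{L(H)}$ can fail outright. Take $G = (\{S,A,B,C\},\{a\},\{S \to aA,\ A \to B,\ B \to C,\ C \to \varepsilon\},S)$ with $R = (N \cup \Sigma)^\ast$. The unique derivation tree $t$ of $a$ has its only terminal on level $1$, so $m_t = 1$, while the conclusion is the chain $B \Rightarrow C \Rightarrow \varepsilon$ of depth $n_t = 2$. Its conclusive levels $B$, $C$, $\varepsilon$ are pairwise distinct, so your excision lever never fires; the path from $S$ to the terminal $a$ has length one and repeats no nonterminal, so your pumping lever never fires either. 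Hence $a \in \prescript{}{lc}{L(H)} - \prescript{}{sc}{L(H)}$, and no amount of surgery on derivation trees of $G$ itself can produce the required short-conclusion tree.

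The paper escapes exactly this trap by modifying the grammar rather than the tree: it adjoins the unit production $S \to S$ to $P$, obtaining $H_\Delta = (G_\Delta, R)$ with $L(G_\Delta) = L(G)$, and pads the top of any long-conclusion tree with enough applications of $S \to S$ to push the deepest terminal down until $m \geq n$; the conclusive subforest is only shifted to deeper absolute levels, its level strings are unchanged, and so $CC$ is preserved. Strictly speaking this establishes $\prescript{}{lc}{L(H)} \subseteq \prescript{}{sc}{L(H_\Delta)}$, i.e.\ the family-level inclusion $\mathbf{lCTC} \subseteq \mathbf{sCTC}$ that Corollary~\ref{cor:lcsc} actually needs, rather than the theorem as literally worded for a single $H$. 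If you want to rescue your in-place approach you must first grant yourself such a padding production (or some equivalent normal form); without that, the statement you are attacking is false as written, and the ``careful induction on $n_t - m_t$'' you anticipate has no base to stand on.
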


\begin{proof}
    Introduce a~conclusive tree-controlled grammar, $H_{\Delta} = (G_{\Delta}, R)$ such that $G_{\Delta} = (N, \Sigma, P \cup \{S \to S\}, S)$. Then, let $m$ and~$n$ be the~maximum height of generative part and~minimum height of~conclusion in~$\Delta_G(L(G_{\Delta}))$, respectively. It is apparent that by $(m-n)$ applications of the~rule $S \to S$, all sentences $x \in L(G)$ can be generated with a~short conclusion. Thus, the~theorem holds.
\end{proof}

Next, we present the~basic idea describing how to convert a~type-0 grammar $Q = (\{S, S', A, B\}, T, P \cup \{ABBBA \to \varepsilon\}, S)$ in the~third Geffert normal form to an~equivalent conclusive tree-controlled grammar $H = (G, R)$. The~idea consists in the creation of a~derivation in $G$ by context-free productions in an utterly arbitrary way, after which precisely the~substring $ABBBA$ located in the~middle of~the~sentential form is erased repeatedly during the~conclusion---that is, the~controlled part of the~derivation tree. In this way, the~correctness of the~derivation is verified.

More precisely, $G$ generates every $w \in \prescript{}{lc}{L(H)}$ by performing three consecutive phases: (I), (II), and~(III). First, by using context-free productions, the~sentential form $uSw$ is derived, where $u$ is a~string over $\{AB,$ $ABB\}^*$, and~$w$ is a~terminal string, $w \in \Sigma^\ast$. Considering $w$, phase (I) is not regulated by the~control language $R$.

Phase (II) starts with application of the~production $S \to S'$, which marks the~beginning of the~conclusion. In this phase, $G$ rewrites the~sentential form $uu'S'v'w$, where $v'$ is a~string over $\{BA, BBA\}^\ast$ representing the~nonterminal counterparts of $u, u' \in \{AB, ABB\}^\ast$. Phase (III) is entered upon replacing the~nonterminal $S'$ in the sentential form, as its presence is required to generate any additional symbols. Finally, the sub\-string $ABBBA$ found in the middle of the sentential form is repeatedly activated and erased in accordance with the control language using the following rules:
\[\begin{array}{cccc}
     A \to \bar{A}, & B \to \bar{B}, & \bar{A} \to \varepsilon, & \bar{B} \to \varepsilon.
\end{array}\]

To summarize the rewriting process, every sentence $w \in L(G, R)$ is generated by the following sequence of steps as:
\noindent
\[S \Rightarrow^\ast_{\text{(I)}} uSw \Rightarrow^\ast_{\text{(II)}} uu'S'vw \Rightarrow^\ast_{\text{(III)}} u''\abbba{}v''w \Rightarrow^\ast w,\]

\noindent
where $u'' \in \{AB, ABB\}^\ast$ and $v'' \in \{BA, BBA\}^\ast$ such that $u'' \in \operatorname{prefix}(u')$ and $v'' \in \operatorname{suffix}(v')$.

Now, using the third Geffert normal form, the~section demonstrates that for every type-0 grammar, $Q$, there exists an equivalent conclusive tree-controlled grammar, $H = (G, R)$, where $G = (N$, $\Sigma$, $P$, $S)$ and~$R$ is an~extended union-free regular control language.

\begin{theorem}
\label{th:eq}
Let $L$ be a recursively enumerable language. Then, there exists a conclusive tree-con\-trolled grammar, $H = (G, R)$, where $R \in \mathbf{EUFREG}$ such that $L = \prescript{}{c}{L(H)}$.
\end{theorem}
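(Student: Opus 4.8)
The plan is to pass to the third Geffert normal form and to simulate its single non-context-free rule $ABBBA\to\varepsilon$ by a controlled erasure confined to the conclusion. Starting from $L\in\mathbf{RE}$, Lemma~\ref{lemma:geffert} supplies $Q=(\{S,S',A,B\},\Sigma,P\cup\{ABBBA\to\varepsilon\},S)$ in this form with $L(Q)=L$. I would let $G$ keep every context-free production of $Q$ (the four schemes of Definition~\ref{def:geffert}) and add the marking rules $A\to\bar A$, $B\to\bar B$, the erasing rules $\bar A\to\varepsilon$, $\bar B\to\varepsilon$, and copying rules $A\to A$, $B\to B$ that let an unused symbol wait one level while the others are processed; together with one auxiliary symbol (see below) this gives the seven nonterminals of the abstract. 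A sentence $w$ is then produced exactly as in the Basic Idea: phase~(I) freely derives $uSw$, phase~(II) derives $uu'S'v'w$, and phase~(III), once the spine symbol $S'$ disappears, repeatedly marks the central factor $ABBBA$ to $\abbba$ and erases it.

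The decisive point is where the boundary between the two parts of the tree falls. All terminals are created in phase~(I) and none below it, so the deepest terminal lies at the end of phase~(I); consequently the whole of phases~(II) and~(III)---hence every dismantling of a factor $ABBBA$---lies inside the conclusion and is governed by $R$. Using the copying rules I would drive every $A$ and $B$ down into the conclusion before it is ever marked, so that no occurrence of $ABBBA$ can be taken apart in the unregulated generative part. The control language would be the single extended union-free expression
\[
R=\{A,B\}^\ast\,\{S'\}^\ast\,\bigl(\{\bar A\}\{\bar B\}\{\bar B\}\{\bar B\}\{\bar A\}\bigr)^\ast\{A,B\}^\ast,
\]
which uses the subset atom $\{A,B\}$ permitted by Definition~\ref{def:eufrl}\ref{enum2:ii} to absorb the variable surrounding context without any explicit union; it has four iterations and seven concatenations, comfortably within the announced bounds. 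Every phase-(II) level then has the shape $\{A,B\}^\ast S'\{A,B\}^\ast$, every marking level the shape $\{A,B\}^\ast\,\abbba\,\{A,B\}^\ast$, and every level produced right after an erasure lies in $\{A,B\}^\ast$; each of these belongs to $R$.

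With this in hand I would prove $L=\prescript{}{c}{L(H)}$ by two inclusions. For $L\subseteq\prescript{}{c}{L(H)}$, take a successful derivation of $w$ in $Q$, fix the order in which its copies of $ABBBA$ are removed, and realize it as a derivation tree of $G$ in which each application of $ABBBA\to\varepsilon$ becomes one marking level followed by one erasing level, the remaining symbols being carried down by the copying rules; by the previous paragraph every conclusion level lies in $R$, so $CC$ holds. For $\prescript{}{c}{L(H)}\subseteq L$, start from any tree $t$ with $CC(t)$ and observe that membership of its conclusion levels in $R$ forces the barred symbols to occur only as consecutive blocks $\abbba$; since $\bar A,\bar B$ arise solely from $A,B$, each such block sits over an original factor $ABBBA$, so erasing it realizes exactly one step of $ABBBA\to\varepsilon$. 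Reading these steps off yields a reduction of the phase-(I)/(II) string $uu'v'$ to $\varepsilon$, whence $w\in L(Q)$.

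The step I expect to be hardest is this soundness inclusion, and inside it the need to rule out that some $A$ or $B$ is marked and erased in the generative part, where $R$ imposes nothing. This cannot be waved away: deleting, say, one $A$ and one $B$ from a non-reducible string such as $ABABBBA$ leaves the reducible $ABBBA$, so unregulated deletions would admit spurious sentences. My plan is to block them by combining the boundary placement with the auxiliary nonterminal, which keeps a symbol ``locked'' (neither markable nor erasable) while it is padded downward and releases it as an ordinary $A$ or $B$ only after it has crossed into the conclusion; then every instance of $A\to\bar A$ or $B\to\bar B$ is forced onto a regulated level. Verifying that this device neither suppresses any intended derivation nor admits any new one---so that $R$ together with it captures precisely the Geffert reduction---is the technical core of the argument; checking the claimed level shapes and that $R\in\mathbf{EUFREG}$ within the stated resource bounds is then routine.
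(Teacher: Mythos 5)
Your overall route is the same as the paper's: convert to the third Geffert normal form, keep the context-free rules, simulate the single rule $ABBBA\to\varepsilon$ by marking rules $A\to\bar A$, $B\to\bar B$, erasing rules $\bar A\to\varepsilon$, $\bar B\to\varepsilon$ and propagation rules $A\to A$, $B\to B$, and let an extended union-free control language of the shape ``$\{A,B\}^\ast$, optional $S'$, optional $\abbba$, $\{A,B\}^\ast$'' police only the conclusion. The deviations (you keep $S'\to uv$ and mark the first central block with the unary rules, where the paper instead uses $S'\to u\abbba v$; your subset atom $\{A,B\}$ versus the paper's $N_Q$) are immaterial.

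The genuine gap is exactly the one you flag and then defer. Nothing prevents $A\to\bar A$ followed by $\bar A\to\varepsilon$ from being applied on levels $i$ with $i+1\le m_T$, i.e., strictly inside the generative part, where $CC$ imposes no condition at all. Since the Geffert prefix $u$ is laid down on levels $1,\dots,m_T$ while the terminals descend one per level, every symbol of $u$ except those introduced at the last generative step can be deleted invisibly before the first regulated level; this destroys the bookkeeping that makes reducibility of $uu'v'$ to $\varepsilon$ a certificate of membership in $L(Q)$, and your own $ABABBBA$ example shows that such deletions admit spurious words. Your proposed repair---a locked variant of $A$ and $B$ released only after crossing into the conclusion---does not close the hole as described, because the boundary between the two parts is determined only a posteriori by the position of the deepest terminal and no context-free rule can test for it: a symbol unlocked at level $j$ with $j+2\le m_T$ can still be marked and erased on unregulated levels. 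Either the unlocking must itself be forced onto a regulated level (your sketch does not arrange this), or one must prove that premature deletions can never turn an unreducible certificate into a reducible one (false in general). Until one of these is done, the inclusion $\prescript{}{c}{L(H)}\subseteq L$ is unproved. For what it is worth, the paper's own argument passes over this same point with the bare assertion that the marking and erasing rules ``may only be applied during phase (III)'', so you have put your finger on the real pressure point of the theorem; but identifying it is not the same as closing it, and your write-up explicitly leaves the closing step undone.
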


\begin{proof}
Let $Q = (N_Q, \Sigma, P_Q, S)$ be a type-0 grammar such that $L(Q) = L$. Without any loss of generality, assume that $Q$ conforms to the third Geffert normal form (see Definition~\ref{def:geffert}), and that $N_Q \cap \{\bar{A}, \bar{B}\} = \emptyset$. 
Let us introduce a~conclusive tree-controlled grammar, $G = (N_G, \Sigma, P_G, S)$, and extended union-free regular control language, $R$.

\def\Rii{ R_{2}}
\def\Riii{ R_{3}}
\begin{construction}
Introduce a conclusive tree-controlled grammar $H = (G, R)$ where $G = (N_Q$ $\cup$ $\{\bar{A}, \bar{B}\}$, $\Sigma$, $P_{G}$, $S)$ with $V_G = N_G \cup \Sigma$ and $V_Q = N_Q \cup \Sigma$. Let $P_G = P_{Gen} \cup P_{Act} \cup P_{Pro} \cup P_{Era}$ be constructed in the following way:
\begin{align*}
&&P_{Gen} &= \{p \in P_Q : \operatorname{occur}(\operatorname{rhs}(p), S') \geq 1 \},\\
&&P_{Act} &= \{S' \to u\abbba{}v : S' \to uv \in P_Q, u \in \{AB, ABB\}^\ast, v \in \{BA, BBA\}^\ast\}, \\
&& P_{Pro} &= \{A \to A, B \to B\}, \\
&& P_{Era} &= \{A \to \bar{A}, B \to \bar{B}, \bar{A} \to \varepsilon, \bar{B} \to \varepsilon\}.
\end{align*}

Set the partial control languages, $\Rii$ and $\Riii$, as extended union-free regular languages (see Definition~\ref{def:eufrl}) as 
\begin{align*}
    \Rii &= \{S'\} N_Q^\ast,\\
    \Riii &= \{\abbba{}\} N_Q^\ast,
\end{align*}
and the control language, $R$, an~extended union-free regular language, as

$$R = N_Q^\ast \Rii{}^\ast \Riii^\ast.$$
Observe that $\Rii \subseteq R$, and $\Riii \subseteq R$ without need of union operation, and~that $\Rii$ and $\Riii$ correspond to the~phases (II) and (III) of conclusion, respectively.

The control language, $R$, assures that both context-free phases of the re\-writing process proceed without any restrictions, and that the final phase is only finished successfully if the generated sentence belongs to $L(Q)$.
\end{construction}

\begin{basicidea}
Next, we sketch the reason why $L(Q) = \prescript{}{c}{L(H)}$. 
$H$ simulates the derivation steps of $Q$ by using a combination of context-free productions and the subregular control language. 
Phase (I) is completely contained in the generative part of $\Delta_G(x)$. 
As phase (II) generates new nonterminals and~phase (III) propagates the existing nonterminals at the beginning of a conclusive sentential form, the overall control language contains the prefix from $N_Q^\ast$. 

All context-free productions found originally in $Q$ are represented by the sets $P_{Gen}$ and $P_{Act}$.
The former set consists of all productions of form~\ref{def:geffert3} of $Q$ (see Definition \ref{def:geffert}) with the purpose of generating the terminal string and surrounding nonterminals, 
whereas the latter set represents the productions of form \ref{def:geffert4}, which may be used to effectively activate the central substring, $\abbba$, and subsequently start the erasing phase.

Similarly, the purpose of sets $P_{Pro}$ and $P_{Era}$ is to assure proper generation of the derivation tree; $P_{Pro}$ serves to propagate the corresponding nonterminals $A$ and $B$ to the lower level of the derivation tree, while $P_{Era}$ is responsible for simulation of the sole context-sensitive rule $ABBBA \to \varepsilon$ from $P_Q$ used to erase the current center of the sentential form.

Partial control language $\Rii$ simulates the use of context-free productions of form \ref{def:geffert3} so it is responsible for the generation of the nonterminal suffix needed to generate a sentence. Finally, once $S'$ has been erased from the sentential form, only the~partial control language $\Riii$ may be matched. Thanks to the properties of the~Geffert normal form, the activation and subsequent erasing process may occur only at one position in the sentential form at a time. Providing by $G$, the core substrings, $S'$ and $\abbba$, of any partial control language may only appear at most once in the entire sentential form, and their appearance is mutually exclusive; therefore, $\Rii$ and~$\Riii$ may be iterated without disrupting the consistency of the rewriting process. In this way, the equivalence of $L(Q)$ and $\prescript{}{c}{L(H)}$ is maintained.

In the following claims (\emph{If}) and~(\emph{Only if}), using proof by induction on the number of derivation steps, we prove formally that $L(Q) \subseteq \prescript{}{c}{L(H)}$ and $\prescript{}{c}{L(H)} \subseteq L(Q)$, respectively.
\end{basicidea}

Define the homo\-morphism $h$ from $V_G$ to $V_Q$ as $h(X) = X$ for all $X \in V_G - \{\bar{A}, \bar{B}\}$ and $h(X) = \varepsilon$ for $X \in \{\bar{A}, \bar{B}\}$. Furthermore, define homomorphism $h'$ from $V_G$ to $V_Q$ as $h'(X) = X$ for $X \in V_G - \{\bar{A}, \bar{B}\}$ and $h'(\bar{X}) = X$ for $\bar{X} \in \{\bar{A}, \bar{B}\}$.

\medskip

\noindent\emph{Claim (If).} $S \Rightarrow^n_Q x$ implies that $S \Rightarrow^\ast_{H} x'$, where $x = h(x')$ or $x = h'(x')$, $x \in V_Q^\ast$, $x' \in V_G^\ast$ for some $n\ge 0$ and if $x$ represents a level in conclusive part of $\Delta_G(x')$, then $x' \in R$.
\begin{proof}
\emph{Induction Basis: }Let $n = 0$. The only possible $x$ is equal to $S$, as $S \Rightarrow^0_Q S$. Similarly, $S \Rightarrow^0_{H} S$, where $S = h(S)$. \\
\emph{Induction Hypothesis: }Suppose that the claim holds for all derivations of length $j$ for some $j \geq 0$.\\
\emph{Induction Step: }Consider a derivation of the form 
\[S \Rightarrow_Q^{j+1} x.\]
Then, there also exists $y \in V_Q^\ast$ such that 
\[S \Rightarrow^j_Q y \Rightarrow_Q x\ [p].\]
By the induction hypothesis, there exists a derivation 
\[S \Rightarrow^\ast_{H} y'\text{ where }y = h(y')\text{ or }y = h'(y').\]

Considering $Q$ conforms to the third Geffert normal form, the production $p \in P_Q$ has one of the following forms (see Definition~\ref{def:geffert}):
\begin{enumerate}[leftmargin=\parindent]
	\item {\label{prf:geffert1}a context-free production in accordance with one of forms \ref{def:geffert1} through \ref{def:geffert3}},
	\item {\label{prf:geffert2}a context-free production in form \ref{def:geffert4}},
	\item {\label{prf:geffert3}the context-sensitive erasing production $ABBBA \to \varepsilon$}.
\end{enumerate}
The possibilities that may occur in $H$ based on the production form are the~following.

\begin{enumerate}
	\item The production $p$ is present in $H$; $y' \Rightarrow_{H} x'\ [p]$, where $x = h(x')$. This means that the~level the~production is applied on either is not regulated or it corresponds to the~partial control language $N_Q^\ast \Rii^\ast$, depending on whether $S$ or $S'$ is in the~middle of~$x'$.
	
	\item Let $p = S' \to uv \in P_Q$ and $p' = S' \to u\abbba{}v \in P_G$ for some $u \in \{AB, ABB\}^\ast$, $v \in \{BA, BBA\}^\ast$.
	The production $p$ serves as a transition between generating and erasing phases (II) and (III) in $Q$, which are regulated by partial control languages $N_Q^\ast \Rii^\ast$ and $N_Q^\ast \Riii^\ast$, respectively. 
\[\arraycolsep=1.4pt
  \begin{array}{lll}
 y &= \alpha S' \beta \Rightarrow_Q &\alpha uv \beta = x = h(x')\ [p]\\
 y' &= \alpha S' \beta \Rightarrow_{H} &
 \alpha u \abbba v \beta = x'\ [p'] 
\end{array}\]

	for some $\alpha \in N_Q^\ast$, and $\beta \in N_Q^\ast \Sigma^\ast$. In $H$, the~$\abbba$ substring is erased immediately after being generated together with the~generation of another activated substring $\abbba$ in the~next level of the~derivation tree. 
	\item The context-sensitive production $ABBBA \to \varepsilon \in P_Q$ is simulated by consecutive application of context-free productions, $X \to \bar{X}$, $\bar{X} \to \varepsilon$ for $X \in \{A, B\}$, to select the $ABBBA$ substring and subsequently erase it. In $Q$, this erasure is performed in one derivation step and it is sufficient to mark the ensuing $\abbba$ substring in $H$.	
	\[\arraycolsep=1.4pt
	\begin{array}{lcll}
	y\phantom{'} =& \alpha ABBBA \beta &\Rightarrow_Q &\alpha \beta = x = h(x')\\
	y' =& \alpha \abbba \beta &\Rightarrow_{H}^0 &x'
	\end{array}	\]
	for some $\alpha \in N_Q^\ast$, and $\beta \in N_Q^\ast \Sigma^\ast$.	
	
	To assure that all required nonterminals placed next to each other are affected, the level of the derivation tree is regulated by the partial control language $N_Q^\ast \Riii^\ast$. Notice that all terminals occurred in the previous levels of $\Delta_G(x')$ so we have only nonterminal strings in the conclusive levels to control by $N_Q^\ast \Riii^\ast$.

\end{enumerate}
Thus, this claim conforms to the rules of induction.
\end{proof}

\noindent\emph{Claim (Only if).} $S \Rightarrow_{H}^n x$ implies that $S \Rightarrow_Q^\ast x'$, where $x \in V_G^\ast$, $x' \in V_Q^\ast$, such that $x' = h(x)$ or $x' = h'(x)$ for some $n \geq 0$ and if $x$ represents a level in conclusive part of $\Delta_G(x)$, then $x \in R$.
\begin{proof}
\emph{Induction Basis: }$S \Rightarrow^\ast_H S = x$ implies $S \Rightarrow_Q^0 S = x'$ where $x = h(x')$.\\
\emph{Induction Hypothesis: }Suppose that the claim holds for all derivations of length $j$ for some $j \geq 0$.\\
\emph{Induction Step: }Consider a derivation of the form
\[S \Rightarrow_{H}^{j+1} x.\]
Then, there also exists $y \in V_G^\ast$, such that
\[S \Rightarrow_{H}^j y \Rightarrow_{H} x.\]
By the induction hypothesis, there exists a derivation
\[S \Rightarrow_Q^* y', \text{ where } h(y) = y' \text{ or }h'(y) = y'.\] 
According to the subsets of $P_G$ to which the~used production, $p$, belongs to, four cases in $Q$ follow.

\begin{enumerate}
	\item 
	Production $p \in P_{Gen}$ containing $S'$ in $\operatorname{rhs}(p)$.
	\begin{enumerate}
		\item \label{pgenS'} The~form of $p$ depends on the moment of the rewriting when it is applied at; it either serves as the~entry point of the~conclusion, by using the production~$p = S \to S'$, or it is used to generate nonterminals to the right of $S'$ and~$p = S' \to uS'v$, $v \in \{BA, BBA\}^\ast$. The production $S' \to uS'v$ is used on the level described by the partial control language $N_Q^\ast \Rii^\ast$.
	\end{enumerate}
	
In this case, the~production may be applied in $Q$ in a way analogous to $H$,
	\[y' \Rightarrow_Q x'\ [p], \text{ where } h(x) = x'. \]
	Considering the structure of the production, it is clear that 
	\begin{equation*}
	 \operatorname{occur}(\operatorname{lhs}(p), S') = \operatorname{occur}(\operatorname{rhs}(p), S') = 1
	\end{equation*}
	for all $p \in P_{Gen}$; thus, no extra $S'$ symbols are generated. The nonterminals that were not affected by the production, and are already present in the sentential form, are nondeterministically propagated using the productions of $P_{Pro}$; otherwise, the~application of productions of $P_{Era}$ would cause the rewriting process to halt in the future.
	
	Consequently, the corresponding partial control language, $\Rii^\ast$, is iterated precisely once while the~nonterminal $S'$ is present in the~sentential form, as each iteration must contain one of the aforementioned nonterminals.
	\item 
	Production $p \in P_{Pro}$ serves to propagate the corresponding nonterminal to the following level of the de\-ri\-va\-tion tree while not affecting the sentential form. Because of this, application of $p$ in $Q$ is equal to
	\[y' \Rightarrow_Q^0 y' = x'.\]
	Considering the production $p \in P_{Pro}$ works with nonterminals $\{A, B\} \subset N_Q$, its application is allowed at any place of the~control language, $N_Q^\ast$, $\Rii^\ast$, and~$\Riii^\ast$.
	\item Production $p \in P_{Act}$ provides the transition between phases (II) and (III) of the rewriting process, which are described by partial control languages $\Rii^\ast$ and $\Riii^\ast$, respectively. Necessarily, $p$ has the form of $S' \to u\abbba{}v$, where $u \in \{AB,$ $ABB\}^\ast$, $v \in \{BA, BBA\}^\ast$. 
	This process may be described as
	\[y = \alpha S'\beta \Rightarrow_{H} \alpha u\abbba v\beta = x \ [p],\]
	with $h(x) = h(\alpha u \abbba v \beta) = h(\alpha uv \beta)$ for some $\alpha \in N_Q^\ast$, $\beta \in N_Q^\ast \Sigma^\ast$, and
	\[y' = \alpha S'\beta \Rightarrow_Q h(\alpha uv\beta) = x' = h(x)\ [S' \to uv]. \]
	
	Subsequently, the $\abbba$ substring is removed in $H$ using the productions of $P_{Era}$. 
	\item Productions $p \in P_{Era}$ are used to repeatedly select nonterminals of the $ABBBA$ substring and erase them in an inside-out way. The initial place of erasure is determined by the location of the~nonterminal $S'$.
	\begin{enumerate}
		\item Let $p_X = X \to \bar{X}$, $X \in \{A, B\}$. Production $p_X$ nondeterministically marks the nonterminal to~be e\-rased. However, this step is only required to simulate the context-sensitive production in $H$, and therefore application of~$p_X$ does not affect the sentential form of~$Q$. Production $p_X$ may be applied in the following way:
		\begin{align*}
		y &= \alpha uXv \beta \Rightarrow_{H} \alpha u\bar{X}v \beta = x\ [p_X]
		\end{align*}
		
		with $uXv \in \{A, \bar{A}\}\{B,\bar{B}\}^3\{A, \bar{A}\}$ and $X \in \{A,B\}$, 
		and whose equivalent in~$Q$ would be as follows:
		\[y' = \alpha ABBBA \beta \Rightarrow_Q^0 \alpha ABBBA \beta = h'(x).\]
				
		\item Let $p_{\bar{X}} = \bar{X} \to \varepsilon$, $X \in \{A, B\}$. Production $p_{\bar{X}}$ erases the previously selected nonterminal to simulate the production $ABBBA \to \varepsilon \in P$. It can be applied either immediately after the application of a production from $P_{Act}$, or as the follow-up of productions from $P_{Pro}$ to generate the~next level of the~derivation tree. Considering the $\abbba$ substring only serves as an intermediary in the erasing process, it may be ignored in $Q$ completely;
		\[
		\arraycolsep=1.4pt
		\begin{array}{lll}
		y &= \alpha \bar{u} \bar{X} \bar{v} \beta \Rightarrow_{H}\alpha \bar{u}\bar{v} \beta =  x\phantom{'}\\
		y' &= \alpha\beta\Rightarrow_Q^0 x' = y'
		\end{array}
		\]
		where $\bar{X} \in \{\bar{A}, \bar{B}\}, \bar{u}, \bar{v} \in \{\bar{A}, \bar{B}\}^\ast$ such that $\bar{u}\bar{X}\bar{v} \in \{\bar{A}^{i'} \bar{B}^{j'} \bar{A}^{k'} : i' \in \{0,1\}, j' \in \{0, 1, 2, 3\}, k' \in \{0,1\}\}$, $\operatorname{occur}(\alpha, \{\bar{A}, \bar{B}\}) = \operatorname{occur}(\beta, \{\bar{A}, \bar{B}\}) = 0$ and~$h(y) = y'$.
	\end{enumerate}
	These productions may only be applied during phase (III) of the rewriting process, which is regulated by the partial control language $N_Q^\ast \Riii^\ast$. Because of this, the productions $p_X$ and~$p_{\bar{X}}$ have to be applied on all nonterminals of the $ABBBA$ and $\abbba$ substrings, respectively, as seen in Figure~\ref{fig:abbba}. Selection of any other symbols would cause the rewriting process to halt on the following level of the derivation tree.

	Because of the properties of the Geffert normal form, the sentential form will always contain at most one occurrence of the substring $ABBBA$. Therefore, the~partial control language $\Riii^\ast$ is always iterated precisely once as long as the sentential form contains some $\bar{A}$s or $\bar{B}$s. 
\end{enumerate}
\begin{figure}
	\centering
	\begin{forest}
		[$S$\vspace{3mm}
			[$\cdots$, edge label = {node [right, midway, above] {$\iddots$} }, no edge 
			[$\cdots$, no edge]]
			[$A$, no edge [$\bar{A}$]]
			[$B$, no edge [$\bar{B}$]]
			[$\bar{A}$, no edge [$\varepsilon$]]
			[$\bar{B}$,  no edge [$\varepsilon$]]
			[$\bar{B}$, edge label = {node [right, midway, above] {$\phantom{w}\vdots$} }, no edge [$\varepsilon$]]
			[$\bar{B}$,  no edge [$\varepsilon$], no edge]
			[$\bar{A}$, no edge [$\varepsilon$]]
			[$B$, no edge [$\bar{B}$]]
			[$B$, no edge [$\bar{B}$]]
			[$A$, no edge [$\bar{A}$]]
			[$\cdots$, no edge, edge label = {node [right, midway, above] {$\ddots$} }, 
			[$\cdots$, no edge]]
		]
	\end{forest}
	\caption{Derivation tree reflecting phase (III) of the rewriting process. Productions of $P_{Era}$ are applied repeatedly on the~same level to simulate application of $ABBBA \to \varepsilon$.}
	\label{fig:abbba}
\end{figure}

It is clear that $p \in P - P_{Pro}$ can only be used during a~specific phase of the~rewriting process controlled by the~corresponding expression. If this were to be violated, a sentential form not described by $R$ would arise, resulting in blocking the~derivation in~$H$.

It is important to note that iteration of partial control language $\Rii$ and $\Riii$ depends on the presence of their core substring, $S'$ and $\abbba$, in their respective order. To generate a sentence, $x$, it is necessary that TCG $H$ goes through all of the following sentential forms.

\[
\arraycolsep=1.4pt
\centering
\begin{array}{rll|l}
S & \Rightarrow_{H}^\ast &\alpha_1 S x \hfill &\ \alpha_1 \in N_Q^\ast, x \in \Sigma^\ast \\
{}& \Rightarrow_{H} &\alpha_1 S' x & {} \\
{}& \Rightarrow_{H}^\ast &\alpha_1\alpha_2 S' \beta_2 x &\ \alpha_2, \beta_2 \in N_Q^\ast \\
{}& \Rightarrow_{H}^\ast &\alpha_3 \abbba \beta_3 x \phantom{owo}&\ \alpha_3, \beta_3 \in N_Q^\ast\\
{}& \Rightarrow_{H}^\ast &x &
\end{array}
\]
This implies that the~presence of the individual core substrings is mutually exclusive. As a result, only one of the partial control languages, $\Rii$, $\Riii$, may be positively iterated at a time; in that case, it is iterated precisely once. Thus, $\prescript{}{c}{L(H)} \subseteq L(Q)$.
\end{proof}

\noindent\emph{Claim (Long-conclusiveness).} $\prescript{}{c}{L(H)}$ is also long-conclusive.

\begin{proof}
     Let $x = a_1a_2\cdots a_n$ where $a_i \in \Sigma$ for all $1 \leq i \leq n$ generated as 
     \begin{equation*}
     S \Rightarrow_{H}^\ast \alpha_1 S x \Rightarrow_{H} \alpha_1 S' x \Rightarrow_{H}^\ast \alpha_1\alpha_2 S' \beta_2 x \Rightarrow_{H}^\ast \alpha_3\abbba \beta_3 x \Rightarrow_{H}^\ast x
     \end{equation*}
     where $\alpha_i, \beta_j \in N_Q^\ast$ for all $1 \leq i \leq 3, 2 \leq j \leq 3$ and~$x \in \Sigma^\ast$. 
     The~depth of the~generative part for any $T \in \Delta_G(x)$ is $n + 1$ because of the~properties of the~Geffert normal form. Subsequently, the~conclusion has the~minimum depth of~$k + m + 1$, where $k \ge 1$ represents the~number of levels needed to generate $\beta_2$, and $m \ge 1$ is the~number of levels needed to verify the~derivation or, in other words, erase the~substrings $\abbba$ located in the~middle of the~sentential form. Since $\alpha_1$ contains at least $n$ occurrences of $A$, $m \ge n$, so $k + m + 1 > n + 1$ and the~theorem holds.
\end{proof}

\noindent\emph{Claim (Iff).} 
	$S \Rightarrow_Q^\ast x$ if and only if $S \Rightarrow_{H}^\ast x$ where $x \in \Sigma^\ast$. 

\begin{proof}
	Consider $x \in \Sigma^\ast$ in Claims (\emph{If}) and~(\emph{Only if}) of the~previous proof. Since $x = x'$ as $h$ and~$h'$ for terminal symbols is the~identity, thus this claim holds.
\end{proof}
\noindent By Claims (\emph{Iff}) and (\emph{Long-conclusiveness}), $L(Q) = \prescript{}{lc}{L(H)}$, and Theorem~\ref{th:eq} holds.
\end{proof}

\begin{corollary}\label{cor:lcsc}
 $\mathbf{CTC} = \mathbf{sCTC} = \mathbf{lCTC} = \mathbf{RE}$.
\end{corollary}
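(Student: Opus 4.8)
The plan is to prove the corollary by combining an upper bound, namely that all three families are contained in $\mathbf{RE}$, with the lower bounds already supplied by the two preceding theorems, and then to chain the resulting inclusions into a string of equalities.

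First I would establish $\mathbf{CTC} \subseteq \mathbf{RE}$, $\mathbf{sCTC} \subseteq \mathbf{RE}$, and $\mathbf{lCTC} \subseteq \mathbf{RE}$. For any CTCG $H = (G, R)$ with $G$ context-free and $R$ regular, membership of a word $x$ in $\prescript{}{c}{L(H)}$ (respectively $\prescript{}{sc}{L(H)}$, $\prescript{}{lc}{L(H)}$) is semi-decidable: a nondeterministic procedure can enumerate candidate derivations of $G$, assemble the associated derivation tree $t \in \Delta_G(x)$, and mechanically verify that $\operatorname{frontier}(t) = x$, that the appropriate conclusion relation ($m_t \ge n_t$, resp.\ $m_t < n_t$) holds where required, and that $CC(t)$ holds by testing each conclusive level against the regular language $R$. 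Since every one of these checks is effective and $\mathbf{RE}$ is precisely the family of languages with a semi-decidable membership problem, all three families are contained in $\mathbf{RE}$; equivalently, the whole generation mechanism is simulable by a type-0 grammar.

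Next I would supply the matching lower bounds from the preceding results. Theorem~\ref{th:eq} gives, for every recursively enumerable language $L$, a CTCG $H$ with $R \in \mathbf{EUFREG}$ (hence $R$ regular) such that $L = \prescript{}{c}{L(H)}$, yielding $\mathbf{RE} \subseteq \mathbf{CTC}$. The same construction, together with Claim (\emph{Long-conclusiveness}) established inside that proof, shows $L = \prescript{}{lc}{L(H)}$ for the very same $H$, so in fact $\mathbf{RE} \subseteq \mathbf{lCTC}$. Combining this with Theorem~\ref{thm:lcsc}, which states $\prescript{}{lc}{L(H)} \subseteq \prescript{}{sc}{L(H)}$ and hence $\mathbf{lCTC} \subseteq \mathbf{sCTC}$, I obtain the chain $\mathbf{RE} \subseteq \mathbf{lCTC} \subseteq \mathbf{sCTC} \subseteq \mathbf{RE}$.

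Finally, I would read off the equalities. The chain above collapses to $\mathbf{lCTC} = \mathbf{sCTC} = \mathbf{RE}$, while $\mathbf{RE} \subseteq \mathbf{CTC} \subseteq \mathbf{RE}$ collapses to $\mathbf{CTC} = \mathbf{RE}$, so all four families coincide. The only step that demands genuine care rather than bookkeeping is the upper bound $\subseteq \mathbf{RE}$, which is not proved explicitly in the excerpt: one must argue that the conclusive level-checking condition is effectively testable so that the generation process can be carried out by a type-0 grammar (equivalently, a Turing machine). Everything else is a direct reuse of Theorems~\ref{thm:lcsc} and~\ref{th:eq}.
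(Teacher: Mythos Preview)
Your proposal is correct and follows essentially the same route as the paper: the paper's proof also invokes Theorem~\ref{th:eq} (together with its long-conclusiveness claim) for $\mathbf{lCTC}=\mathbf{RE}$, a Turing-machine simulation for $\mathbf{sCTC}\subseteq\mathbf{RE}$, and Theorem~\ref{thm:lcsc} for $\mathbf{lCTC}\subseteq\mathbf{sCTC}$. Your write-up is simply more explicit---you spell out the semi-decidability argument and separately handle $\mathbf{CTC}$, which the paper's three-line proof leaves implicit.
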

\begin{proof}
By Theorem \ref{th:eq}, $\mathbf{lCTC} = \mathbf{RE}$. Every CTCG can be simulated by a Turing machine, so $\mathbf{sCTC} \subseteq \mathbf{RE}$. By Theorem \ref{thm:lcsc}, $\mathbf{lCTC} \subseteq \mathbf{sCTC}$, so $\mathbf{sCTC} = \mathbf{RE}$.    
\end{proof}

Observe that the control language $R$ can be replaced by a~union-free language $\hat{R} = (A^\ast B^\ast S'^\ast)^*$ $(\{\abbba\}^*(A^\ast B^\ast S'^\ast)^*)^*$ while the previous proof technique still works.
\begin{lemma}
\label{lemma:lin}
	The~union-free regular language, $\hat{R}$, can be generated by a right-linear grammar, $G_{\hat{R}}$, with the~nonterminal complexity of~$1$.
\end{lemma}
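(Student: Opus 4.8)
The plan is to reduce the seemingly nested expression for $\hat{R}$ to the Kleene star of a single finite language, and then to read off the one-nonterminal right-linear grammar that any such language admits.

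First I would simplify $\hat{R}$ using the language identity $(X^\ast Y^\ast)^\ast = (X \cup Y)^\ast$, valid for arbitrary languages $X, Y$ because $X \cup Y \subseteq X^\ast Y^\ast \subseteq (X \cup Y)^\ast$ while Kleene closure is monotone and idempotent. Since each of the letters $A$, $B$, $S'$ already lies in $A^\ast B^\ast S'^\ast \subseteq \{A, B, S'\}^\ast$, taking the closure gives $(A^\ast B^\ast S'^\ast)^\ast = \{A, B, S'\}^\ast$; writing $D = \{A, B, S'\}$, the expression collapses to $\hat{R} = D^\ast(\{\abbba\}^\ast D^\ast)^\ast$. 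Applying the identity once more with $X = \{\abbba\}$ and $Y = D$ yields $(\{\abbba\}^\ast D^\ast)^\ast = (\{\abbba\} \cup D)^\ast$, and since $D \subseteq \{\abbba\} \cup D$ and $\varepsilon \in D^\ast$, the leading factor $D^\ast$ is absorbed. Hence $\hat{R} = (\{\abbba\} \cup D)^\ast = U^\ast$, where $U = \{A, B, S', \abbba\}$ is a four-element finite language (with $\abbba$ read as the single word $\bar{A}\bar{B}\bar{B}\bar{B}\bar{A}$).

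Next I would construct $G_{\hat{R}}$ using a single nonterminal, taken as the start symbol $S$ (note that $S$ does not occur among the terminals $A, B, S', \bar{A}, \bar{B}$ appearing in $\hat{R}$), with the productions $S \to AS$, $S \to BS$, $S \to S'S$, $S \to \bar{A}\bar{B}\bar{B}\bar{B}\bar{A}S$, and $S \to \varepsilon$. Each right-hand side is a terminal word optionally followed by $S$, so $G_{\hat{R}}$ is right-linear, and a straightforward induction on derivation length shows $L(G_{\hat{R}}) = U^\ast = \hat{R}$: every terminating derivation emits a finite concatenation of the tokens $A, B, S', \abbba$ and then applies $S \to \varepsilon$, and conversely every word of $U^\ast$ is reached this way. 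As $G_{\hat{R}}$ uses exactly one nonterminal, its nonterminal complexity is $1$, which proves the lemma.

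The construction of the grammar itself is routine, since languages of the form $U^\ast$ for finite $U$ are exactly those generated by one-nonterminal right-linear grammars. The only step that demands care, and the real content of the argument, is the collapse of the nested stars to $U^\ast$: the crux is the identity $(X^\ast Y^\ast)^\ast = (X \cup Y)^\ast$, together with a careful check that no spurious words over $\{A, B, S', \bar{A}, \bar{B}\}$ are introduced or lost—in particular, that treating $\abbba$ as an indivisible token is legitimate here.
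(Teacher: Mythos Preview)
Your proof is correct and coincides with the paper's approach: the paper constructs exactly the same one-nonterminal right-linear grammar (with start symbol $\hat{S}$ rather than $S$, to keep it disjoint from the terminal alphabet $N_G$) and simply asserts that it generates $\hat{R}$. Your explicit collapse of $\hat{R}$ to $U^\ast$ via $(X^\ast Y^\ast)^\ast = (X \cup Y)^\ast$ supplies precisely the verification the paper leaves to the reader.
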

\begin{proof}
Recall that $N_G = \{S, S', A, B, \bar{A}, \bar{B}\}$. Let $G_{\hat{R}} = (\{\hat{S}\}, N_G, P_{\hat{R}}, \hat{S})$ be a~right-linear grammar, where $P_{\hat{R}}$ is defined as follows:
\begin{equation*}
    P_{\hat{R}} = \{\hat{S} \to x\hat{S} : x \in \{A,B,S'\}\} \cup \{\hat{S} \to \abbba \hat{S}, \hat{S} \to \varepsilon\}.
\end{equation*}
It can easily be verified that $L(G_{\hat{R}}) = \hat{R}$, and therefore Lemma~\ref{lemma:lin} holds.
\end{proof}

It has already been shown that any recursively enumerable language can be generated by a tree-controlled grammar with a regular control language whose nonterminal complexity is equal to~7 (see~\cite{Tua11}); however, the~result uses the union operations in the control language. Our regular expression for the control language uses 5 iterations and 8 concatenations. 

\begin{theorem}
\label{lemma:complexity}
	Any recursively enumerable language, $L$, can be generated by a~conclusive tree-controlled grammar controlled by a~union-free regular language using seven nonterminals.
\end{theorem}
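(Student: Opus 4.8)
The plan is to assemble the final theorem directly from the machinery already built in the proof of Theorem~\ref{th:eq} together with Lemma~\ref{lemma:lin}. The key observation is that the construction in Theorem~\ref{th:eq} already produces a conclusive tree-controlled grammar $H=(G,R)$ with $G$ having nonterminal set $N_G = \{S, S', A, B, \bar{A}, \bar{B}\}$ and with $\prescript{}{c}{L(H)} = L$ for an arbitrary recursively enumerable language $L$. Counting these nonterminals gives exactly six, so the real content to verify is that after replacing $R$ by the union-free language $\hat{R} = (A^\ast B^\ast S'^\ast)^* (\{\abbba\}^*(A^\ast B^\ast S'^\ast)^*)^*$ (as noted in the remark preceding Lemma~\ref{lemma:lin}), the grammar stays within the claimed resource bounds while still generating $L$.

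First I would invoke the remark that the control language $R$ of Theorem~\ref{th:eq} may be replaced by the union-free language $\hat{R}$ without breaking the correctness argument, so that $\prescript{}{c}{L(H')} = L$ for $H' = (G, \hat{R})$. By Lemma~\ref{lemma:lin}, $\hat{R} \in \mathbf{UFREG}$, which supplies the "union-free regular language" requirement of the statement. This already witnesses generation of $L$ by a conclusive tree-controlled grammar whose control language is union-free, so the only remaining task is the nonterminal count.

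The main obstacle is reconciling the nonterminal bound: the grammar $G$ uses six nonterminals, yet the theorem advertises seven. The natural resolution is that the statement counts the start symbol separately, or that one extra nonterminal is consumed when the Geffert normal form is set up, or that a seventh symbol is needed to package the erasing mechanism cleanly; in any case the bound of seven is an upper bound, and six suffices under the present construction. I would therefore state explicitly that $N_G = \{S, S', A, B, \bar{A}, \bar{B}\}$ has cardinality six, hence at most seven, and confirm that none of the production sets $P_{Gen}$, $P_{Act}$, $P_{Pro}$, $P_{Era}$ introduces any further nonterminal beyond these. The care needed here is to double-check that the $P_{Act}$ productions $S' \to u\abbba v$ do not secretly require new symbols—they do not, since $u, v$ range over $\{AB,ABB\}^\ast$ and $\{BA,BBA\}^\ast$, both built from $A$ and $B$ alone.

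To close, I would combine these facts: by Theorem~\ref{th:eq} and the replacement remark, $L = \prescript{}{c}{L(H')}$ for the conclusive tree-controlled grammar $H' = (G, \hat{R})$; by Lemma~\ref{lemma:lin}, $\hat{R}$ is union-free regular; and $G$ employs the seven (indeed six) nonterminals listed above. Hence any recursively enumerable language is generated by a conclusive tree-controlled grammar controlled by a union-free regular language using seven nonterminals, establishing the theorem. I do not expect any delicate estimation in this argument—it is essentially a bookkeeping consolidation of earlier results—so the write-up would be short, with the single point of genuine attention being the explicit tally of nonterminals and the verification that the switch from $R$ to $\hat{R}$ preserves the equivalence established in the claims (\emph{If}), (\emph{Only if}), and (\emph{Long-conclusiveness}).
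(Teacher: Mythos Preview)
Your overall assembly of Theorem~\ref{th:eq} and the remark about~$\hat{R}$ is fine, but you have misidentified where the seventh nonterminal comes from. In the descriptional-complexity literature on tree-controlled grammars (and in this paper, consistently with the reference~\cite{Tua11}), the nonterminal count of a pair $(G,R)$ is the number of nonterminals of $G$ \emph{plus} the number of nonterminals of a right-linear grammar generating the control language~$R$. The paper's proof makes exactly this tally: $G'$ contributes the six nonterminals $\{S,S',A,B,\bar{A},\bar{B}\}$, and the right-linear grammar $G_{\hat{R}}$ from Lemma~\ref{lemma:lin} contributes one more, namely~$\hat{S}$, giving $6+1=7$.

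This is also the actual point of Lemma~\ref{lemma:lin}: that $\hat{R}$ is union-free is already obvious from its defining expression, so your reading of the lemma as merely certifying $\hat{R}\in\mathbf{UFREG}$ misses its purpose. Its content is the bound ``nonterminal complexity~$1$'' for the control grammar. Your speculations that the start symbol is counted separately, or that Geffert's form sneaks in an extra nonterminal, or that seven is just a slack upper bound on six, are all incorrect. To fix the argument, invoke Lemma~\ref{lemma:lin} for its stated conclusion and add the $6+1$ count explicitly; the rest of your outline then matches the paper.
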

\begin{proof}
	Let $Q'$ be a type-0 grammar such that $L = L(Q')$ and construct $H' = (G', \hat{R})$ as a~conclusive tree-controlled grammar such that $L(Q') = L(H')$, where $G'$ is constructed in accordance with the~proof of Theorem~\ref{th:eq}. $G' = (\{S, S', A, B,$ $\bar{A}, \bar{B}\}, \Sigma_{G'}, P_{G'}, S)$; similarly, let $G_{\hat{R}}$ be a right-linear grammar constructed in accordance with the~proof of Lemma~\ref{lemma:lin}, such that $L(G_{\hat{R}}) = \hat{R}$ and $G_{\hat{R}} = (\{\hat{S}\},$ $\{S, S', A, B, \bar{A}, \bar{B}\}$, $P_{\hat{R}}, \hat{S})$. Clearly, the nonterminal complexity of grammars $G'$ and $G_{\hat{R}}$ is 6 and~1, respectively, bringing the overall nonterminal complexity of $H'$ to~7.
\end{proof}

\section{Conclusion and~Open Problems}
We conclude this paper by remarking on some of the~properties of conclusive tree-controlled grammars. Although this modification is based upon the~same principle as the~original tree-controlled grammars, its main advantage lies in the~fact that until all terminals have been generated, the~derivation tree is not regulated, and thus, the~modification offers significantly lower descriptional complexity. Observe that all families of languages generated by conclusive tree-controlled grammars are equivalent, meaning the~length of the~conclusion should not affect the~generative power in any way.

Finally, we propose four open problems regarding the~conclusive modification of tree-controlled grammars:
\begin{enumerate}
    \item Consider conclusive tree-controlled grammars with a short conclusion. What is the minimum depth of conclusion needed to maintain the~computational completeness of the~grammars or the minimum ratio of the depths of the generative and conclusive parts?
    \item What is the minimum possible nonterminal complexity of conclusive tree-controlled grammars? Can it be further restricted beyond seven nonterminals?
    \item Introduce a~modification of conclusive tree-controlled grammars whose core grammar is at most linear. Does this modification affect the~generative power of conclusive tree-controlled grammars?
    \item Study other formal grammars working in a~conclusive way, where generative and conclusive parts of the derivation tree can be distinguished. 
\end{enumerate}

\section*{Acknowledgment}
\small
This work was supported by the~Ministry of Education, Youth and~Sports of Czech Republic project ERC.CZ no. LL1908  and the~BUT grant FIT-S-20-6293.

\bibliographystyle{eptcs}
\bibliography{generic}
\end{document}